\newtheorem{theorem}{Theorem}
\newtheorem{remark}{Remark}
\newtheorem{Corollary}{Corollary}
\newtheorem{Definition}{Definition}
\newtheorem{Example}{Example}
\newtheorem{Lemma}{Lemma}
\begin{document}
	
	\title{Optimal Coding Scheme and Resource Allocation for  Distributed Computation with Limited Resources}
	
	\author{\IEEEauthorblockN{Shu-Jie Cao, Lihui Yi, Haoning Chen and Youlong Wu\\} 
\thanks{This work is supported   by   NSFC grant NSF61901267.
}	 
		\IEEEauthorblockA{
			ShanghaiTech University, Shanghai, China}
		\{caoshj, yilh, chenhn, wuyl1\}@shanghaitech.edu.cn
	}

	\maketitle
	
	\begin{abstract}
		A central issue of distributed computing systems is how to optimally allocate  computing and storage resources  and design  data shuffling strategies such that the total execution time for computing and data shuffling is minimized. This is extremely  critical when the computation, storage and communication resources are limited. In this paper, we study the  resource allocation and coding scheme for the MapReduce-type framework with \emph{limited} resources. In particular, we focus on the  coded  distributed computing (CDC) approach proposed by Li \emph{et al.}. {
					  We first extend the  asymmetric CDC (ACDC) scheme proposed by Yu \emph{et al.} to the cascade case where each output function is computed by multiple servers. Then we demonstrate that  whether CDC or ACDC is better depends on  system parameters (e.g., number of computing servers) and task parameters (e.g., number of input files), implying that neither CDC nor ACDC is optimal. By merging the ideas of CDC and ACDC, we  propose a hybrid scheme  and show that it can strictly outperform  CDC and ACDC.  
		Furthermore, we derive an information-theoretic converse showing that for the MapReduce task using a type of weakly symmetric Reduce assignment, which includes the Reduce assignments of CDC and ACDC as special cases,  the hybrid scheme with a corresponding resource allocation strategy is optimal, i.e., achieves the minimum execution time, for an arbitrary amount of computing servers and storage memories.  
		}

	\end{abstract}
	
	\begin{IEEEkeywords}
		Distributed Computing, Resource Allocation, Coding
	\end{IEEEkeywords}
	
	\section{Introduction}

	Distributed computing   has attracted significant interests as it enables  complex computing tasks to process in parallel across many computing nodes to speed up the computation.  However, due to massive data and limited communication   {resources},  the distributed computing systems  suffer from  {the} communication bottleneck \cite{ComBottleneck}. Many previous works have shown that  using coding  can greatly reduce communication load (see, e.g., \cite{b1,b2,Straggler_2,MR,Edge,CCEP,Scalable,ACM'20,Fu'Glb19,Wan'aXiv19,CompressedCDC,Leveraging,NewCombinatorial,CodedTerasort,MachineLearning}).  
	
	In \cite{b1}  Li \emph{et al.} considered  a  MapReduce-type framework   consisting of three phases: Map, Shuffle and Reduce, and   {proposed the coded distributed computing (CDC) scheme.  In the CDC scheme,  $K$ servers first map their  stored files into intermediate values in the Map phase, and then based on the mapped intermediate values,  the servers multicast  coded symbols   to other servers in the Shuffle phase, and finally, each server computes   output functions based on the local mapped intermediate values and the received coded symbols.  The CDC scheme was generalized to the \emph{cascaded} case where each Reduce function is computed by   $s \geq 1$ servers, which is helpful in  reducing the communication load of the next-round data shuffling  when the job consists of multiple rounds of  computations. Based on the CDC scheme of case $s=1$,   an asymmetric coded distributed {computing} (ACDC) scheme  was proposed  in \cite{b2} which 
	allows a set of  servers serving as ``helper'' to perform Map and Shuffle operations, but not Reduce operation. The ACDC scheme   achieves the minimum execution time when the total number of computing servers   and size of storage memories are sufficiently large.  This is, however, impractical in some real distributed systems which only have limited computing and storage resources.} 
	
	In this paper, we try to answer the following questions. Given a MapReduce-type task with an \emph{arbitrary} amount of computing servers and storage  memories, 1)  is it always good to use all available computing servers? If not, how many servers should be exactly used? 2) how to efficiently  utilize the storage memories and allocate files to servers?  3) how to efficiently exchange  information among computing servers? We answer  these questions by establishing an optimal coding scheme and resource allocation strategy.  
	{In more detail, we first show that  neither the CDC nor ACDC scheme is optimal, and whether the CDC or ACDC scheme is better depends on  system parameters (e.g., number of computing servers and size  of storage memories) and task parameters (e.g., number of input files).  Then, we propose a hybrid coding scheme for the case $s\geq1$ by combining the ideas of  CDC and ACDC, and show that this hybrid scheme  strictly outperforms CDC and ACDC. 
	The generalized ACDC scheme  for $s\geq 1$  is similar in spirit to  the  coded caching schemes  in \cite{Centralized,WanKai'19}.   
By deriving   an information-theoretic converse on the execution time, 
	 we prove that for any MapReduce task using a weakly symmetric Reduce assignment which includes  the assignments of CDC and ACDC as special cases,  our scheme achieves the minimum execution time.   The optimality result holds for {an} arbitrary amount of computing servers and storage memories. 
	 
	}
	




	\section{Problem Formulation}
	
Let $Q,N,F,B, K, M \in \mathbb{N}$ be some positive integers, and define notations   $\mathcal{Q}\triangleq\{1,\ldots,Q\}$, $\mathcal{K}\triangleq\{1,\ldots,K\}$ and $\mathcal{N
}\triangleq\{1,\ldots,N\}$.     Given $N$ input files $\omega_{1},\dots,\omega_{N} \in \mathbb{F}_{2^{F}}$,  a  task wishes  to compute $Q$ output functions $\phi_{1},\dots,\phi_{Q}$, where $\phi_{q}$: $(\mathbb{F}_{2^{F}})^{N} \rightarrow \mathbb{F}_{2^{B}}$, $q \in \{1,\dots,Q\}$, maps all input files into a ${B}$-bit output value $u_q = \phi_{q}(\omega_{1},\dots,\omega_{N}) \in \mathbb{F}_{2^{B}}$.   

We consider a MapReduce-type framework with \emph{limited} computing and storage resources, in which  there are  $K$ available computing servers, and all servers  in total can store up to  $M$ files ($M\geq N$, otherwise the task cannot be completed). In \cite{Ezzeldin'ITW17,Yan'ISIT19} the authors showed that the storage and computation cost can be reduced by letting each node    choose to calculate the intermediate values only if they are used subsequently. Here we focus on the storage cost of storing input files for simplicity. We believe that the extension of our work which jointly considers  the cost of  storing files, intermediate values and output values will not be  hard.

We call $(Q,N,F,B)$ the \emph{task parameters} as they are determined by the intrinsic features of the computation task, and  $(K,M)$ the \emph{system parameters} as they are related to  the available resources of the computation system.
	
	The computation of the output function $\phi_{q}$, $q \in \mathcal{Q}$ can be decomposed as
	$
	\phi_{q} (\omega_{1},\!\ldots\!,\omega_{N}\!) \!= \! h_{q}\!\left((g_{q,1}(\omega_{1}),\!\ldots\!,(g_{q,{N}}(\omega_{N}))\right)$,  where the ``Map'' function $g_{q,n}$ and ``Reduce'' function $h_{q}$, for $n \in\mathcal{N}$ and $q \in \mathcal{Q}$,  are illustrated as follows: 
 	\begin{itemize}
		\item $g_{q,n}:\,\mathbb{F}_{2^{F}} \rightarrow \mathbb{F}_{2^{V}}$   maps the input file $\omega_{n}$ into a length-$V$, $V \in \mathbb{N}$,  \emph{intermediate value} $v_{q,n} = g_{q,n}(\omega_{n}) \in \mathbb{F}_{2^{V}}$;
		\item $h_{q}:(\mathbb{F}_{2^{V}})^{N} \rightarrow \mathbb{F}_{2^{B}}$   maps the intermediate values of the output function $\phi_{q}$ in all input files into the output value $u_{q} = h_{q}(v_{q,1},\dots,v_{q,N}) = \phi_{q}(\omega_{1},\dots,\omega_{N})$.
	\end{itemize}
	

	The whole computation work proceeds in the following three phases: \emph{Map}, \emph{Shuffle} and \emph{Reduce}.
	\subsubsection{Map Phase} Denote the indices of files  mapped by  Node $k\in \mathcal{K}$ as $\mathcal{M}_{k}\subseteq \mathcal{N}$. Every file is mapped by some nodes (at least one), i.e., $ \bigcup_{k \in \mathcal{K}}\mathcal{M}_{k} =\mathcal{N}$. Node $k$ computes the Map function $({g}_{1,n}(\omega_{n}),\dots, {g}_{Q,n}(\omega_{n})) = (v_{1,n},\dots,v_{Q,n})$, if $n\in \mathcal{M}_{k}$. 
	
Since all nodes can only store up to $M$ input files, we have: 
	\begin{IEEEeqnarray}{rCl}\label{Storage}
	 \sum_{k \in\mathcal{K}} |\mathcal{M}_{k}|\leq M.
\end{IEEEeqnarray}

	\subsubsection{Shuffle Phase} 
	
	A message $\emph{X}_{k}\in\mathbb{F}_{2^{\ell_{k}}}$, for some $\ell_k \in \mathbb{N}$, is generated by each Node $k\in \mathcal{K}$, as a function of intermediate values computed locally, i.e., $\emph{X}_{k} = \psi_{k}(v_{1,n},\dots,v_{Q,n}:n \in \mathcal{M}_{k})$. Then Node $k$ multicasts it to  other nodes through a shared noiseless link.
	

	\subsubsection
	{Reduce Phase}   
	Assign $Q$ Reduce functions to $K$ nodes. Each output function will be  computed by $s$ nodes ($ s\in\mathcal{K}$). Denote $\mathcal{W}_{k}\subseteq \mathcal{Q}$ as the  assignment indices of Reduce functions on the Node $k \in \mathcal{K}$, with $ \bigcup_{k \in \mathcal{K}} \mathcal{W}_{k} = \mathcal{Q}.$ 	Each Node $k$ produces  output values $u_q = h_q(v_{q,1},\dots,v_{q,N})$ for all $q\in \mathcal{W}_{k}$, based on the local Map intermediate values $\{v_{1,n},\dots,v_{Q,n}:  n \in \mathcal{M}_{k}\}$ and the received messages $({X}_{k}:k \in\mathcal{K})$.

	Unlike the symmetric Reduce design in \cite{b1} where   $Q$ Reduce functions are assigned symmetrically to all $K$ nodes, we allow a more general  Reduce design where some nodes may not produce any Reduce function. This will cause more complexities both in the scheme design and converse proof, but is worth as  it can reduce the time cost in the Shuffle phase (see  \cite{b2} or  Section \ref{SecResults} ahead). 
	   We now introduce a weakly symmetric Reduce assignment, denoted by $\mathcal{W}_\text{symmetric}$ that is evolved from the Reduce design in \cite{b1}: 
\begin{Definition}{(Weakly Symmetric Reduce Assignment $\mathcal{W}_\textnormal{symmetric}$)}\label{DefFilePm}
Given a task parameter $s$ and a designed $\mathcal{K}_s$,  evenly split $Q$ Reduce functions into   $\binom{|\mathcal{K}_s|}{s}$ disjoint batches of size $\frac{Q}{\binom{|\mathcal{K}_s|}{s}}\in\mathbb{N}$\footnote{We focus on the case  ${Q}/{\binom{|\mathcal{K}_s|}{s}}\in\mathbb{N}$, like that in \cite{b1}.}, each corresponding to a subset $\mathcal{P}\subseteq \mathcal{K}_s$ of size $s$, i.e., 
\begin{IEEEeqnarray}{rCl}
\mathcal{Q} =\cup_{\mathcal{P}\subseteq\mathcal{K}_s:|\mathcal{P}|=s }\mathcal{D}_{\mathcal{P}},
\end{IEEEeqnarray}
where   $\mathcal{D}_{\mathcal{P}}$ denotes the batch of $\frac{Q}{\binom{|\mathcal{K}_s|}{s}}$ Reduce functions about the subset $\mathcal{P}$. Node $k\in\mathcal{K}_s$ computes the  Reduce functions whose indices are in $\mathcal{D}_{\mathcal{P}}$ if $k\in \mathcal{P}$. In this way,  	
	we have 
	\begin{IEEEeqnarray}{rCl}\label{eqReduce}
	\text{$|\mathcal{W}_i|=\frac{sQ}{K_s}\in\mathbb{N}$,  $\forall i\in \mathcal{K}_s$, and $|\mathcal{W}_j|=0$,  $\forall j\notin \mathcal{K}\backslash\mathcal{K}_s$}.
\end{IEEEeqnarray}
\end{Definition}


Note that here only the nodes in $\mathcal{K}_s$ produce Reduce functions, and when $\mathcal{K}_s=\mathcal{K}$, the Reduce design  $\mathcal{W}_\text{symmetric}$ turns to be the same as that in the CDC scheme. Also, when $s=1$, $\mathcal{W}_\text{symmetric}$  is simply to let each node in $\mathcal{K}_s$ produce $Q/K_s$ Reduce functions.

We focus on the case where the computations in Map and Reduce phase run in parallel,  while the Map, Shuffle, and Reduce  phases take place in a sequential manner.  We adopt the  following definitions the same as in    \cite{b2}:  
	\begin{Definition}[Peak Computation Load]
		The peak  computation load  is defined to be  $p \triangleq \frac{\max \limits_{k\in\{1,\dots,K\}}|\mathcal{M}_{k}|}{N}$.  \label{defCpLoad}
	\end{Definition}
	
	\begin{Definition}[Communication Load]
		The   communication load  is defined  as the total number of bits communicated by the $K$ nodes, i.e.,
		$L\triangleq\frac{\sum_{k\in\mathcal{K}}\ell_{k}}{NQV}$.  \label{defCnLoad}
	\end{Definition}

	\begin{Definition}[Execution Time] \label{DefTime}
	Denote the time consumed in Map, Shuffle and Reduce phase as $T_{\textnormal{map}}$, $T_{\textnormal{shuffle}}$ and $T_{\textnormal{reduce}} $, respectively. Define 
	\begin{IEEEeqnarray}{rCl}
	T_{\textnormal{map}} &\triangleq& \max \limits_{k\in\mathcal{K}} 
	c_\textnormal{m}\frac{|\mathcal{M}_{k}|}{N} =c_\textnormal{m}p,\\
	T_{\textnormal{shuffle}} &\triangleq&c_\textnormal{s}L,~ T_{\textnormal{reduce}} \triangleq   c_\textnormal{r} \max \limits_{k\in\mathcal{K}} ~|\mathcal{W}_{k}|,
\end{IEEEeqnarray}
for some system parameters $c_\textnormal{m}, c_\textnormal{s},c_\textnormal{r}>0$.

	Define the achievable execution time with parameter $s$ and using the Reduce assignment $\mathcal{W}_\textnormal{symmetric}$ as 
	\begin{IEEEeqnarray}{rCl}
	T_\mathcal{W}(s)\triangleq T_{\textnormal{map}} \!+\!T_{\textnormal{shuffle}} \!+\!T_{\textnormal{reduce}}.
\end{IEEEeqnarray} The minimum execution time is denoted by $T_\mathcal{W}^*(s)$.
	\end{Definition}
	{
	The goal is to design the  Map, Shuffle, Reduce operations and a resource allocation strategy such that the execution time of a given  MapReduce-type task with  system parameters $(K,M,c_\textnormal{m},c_\textnormal{s},c_\textnormal{r})$ and task parameters $(N,Q,s)$,  is minimized.}
	
	\section{Motivation and Examples}\label{moti}
	Consider a MapReduce-type task with     $s = 1$. If   the CDC scheme \cite{b1} is applied, then 
	the following execution time $T_\textnormal{CDC}$ is achievable: For $K_c \le K$ and  $r_1\leq M/N, r_1\in\{1,\ldots,K_c\}$,  
	\begin{IEEEeqnarray}{rCl}
		T_\textnormal{CDC}= c_\textnormal{m}\frac{r_1}{K_c} + c_\textnormal{s} \frac{1}{r_1 } \left( 1- \frac{r_1}{K_c}\right)+  c_\textnormal{r}\frac{Q}{K_c}.
	\end{IEEEeqnarray}

If   the ACDC scheme \cite{b2} is applied, then 
	the following execution time $T_\textnormal{ACDC}$ is achievable: For $K_s+K_h \le K$ and  $r_2\leq M/N, r_2\in\{1,\ldots,K_s\}$,  
	{\begin{IEEEeqnarray}{rCl}\label{EqACDC}
		T_\textnormal{ACDC}	=c_\textnormal{m}\max\left\{\!\frac{r_2}{K_s},\frac{1}{K_h}\!\right\} \!+\!  c_\textnormal{s}  \frac{1}{r_2\!+\!1 }\! \left( \!1\!-\! \frac{r_2}{K_s}\!\right) \!\!+ \! c_\textnormal{r}\frac{Q}{K_s}.~
	\end{IEEEeqnarray}}
	
	In \cite{b2}  it showed that if $K$ and $M$ are sufficiently large such that 
	$K\geq K'\geq Q$ and $M \geq M' = (r'_2 + 1)N$, where  
	\begin{subequations}
		\begin{align}
			& r'_2= \mathop{\arg\min}_{r_2\in\{0,\ldots,Q\}}   c_\textnormal{m} \frac{r_2}{Q}  \!   +\!  c_\textnormal{s}  \frac{1}{r_2\!+\!1 } \left( 1\!-\! \frac{r_2}{Q}\right), \\
			&K' =  \left\{\begin{array}{cc} Q+\lceil{{Q}/{r_2'}}\rceil,~&~\text{~if~}0<r'_2< Q  \\ Q,&~\text{~if~} r_2'=Q \end{array}, \right.
			\end{align}\label{ACDC optimal}
	\end{subequations}
then $T_\textnormal{ACDC}$ is the minimum execution time. 
	
	Unfortunately, the ACDC scheme is not optimal in   general.  In the following examples, we
show that whether CDC or ACDC  is better depends on the task and system parameters.
	
	\begin{Example}
		Consider a MapReduce-type task with $N=Q=4$, $K = 3$, $M = 12$ and $c_\textnormal{m} = 2c_\textnormal{s}$.
		
		When applying the CDC scheme with $K_c =3$ and   $r_1 = 1$, which  does not violate any resource constraint, we have $T_\textnormal{CDC} = \frac{4}{3}c_\textnormal{s} + \frac{4}{3}c_\textnormal{r}$. When applying the ACDC schme, there are only three possible allocations due to the constraints of $K$ and $M$. Here we list all of them 1) $(K_s,K_h,r_2) = (2,1,0)$, $T_\textnormal{ACDC} = 3c_\textnormal{s} + 2c_\textnormal{r}$; 2) $(K_s,K_h,r_2) = (2,1,1)$,  $T_\textnormal{ACDC} = \frac{9}{4}c_\textnormal{s} + 2c_\textnormal{r}$; 3) $(K_s,K_h,r_2) = (1,2,0)$, $T_\textnormal{ACDC} = 2c_\textnormal{s} + 4c_\textnormal{r}$.
		
		 {It can be seen that} $T_\textnormal{CDC} < T_\textnormal{ACDC}$ always holds, indicating that when the amount of resources is not large enough, the execution time of the CDC scheme may be shorter than that of the ACDC scheme.\label{eg1}
	\end{Example}
	%
	%
	%
	\begin{Example}
		Reconsider the example above, but with $K = 6$.  
		Assume   $c_\textnormal{r}$ is too small such that the Reduce time can be ignored. 
		For the CDC scheme, the optimal allocation is $K_c = 6,r_1 = 1$, and then $T_\textnormal{CDC} = \frac{7}{6}c_\textnormal{s}$. For the ACDC scheme, it can be checked that the best choice is $K_s = 3, K_h = 3,r_2 = 1$, and then $T_\textnormal{ACDC} = c_\textnormal{s}$.  	Therefore, $T_{\textnormal{ACDC}}<T_{\textnormal{CDC}}$, indicating that    when the resources are sufficiently large, the execution time of the ACDC scheme could be shorter than that of the CDC scheme.
	\end{Example}

In Fig. 1 we compare the execution time of the CDC and ACDC schemes, demonstrating that which scheme is better varies with the  number of  computing nodes $K$.
	
			\begin{figure}
			\centering
			\includegraphics[width=0.45\textwidth]{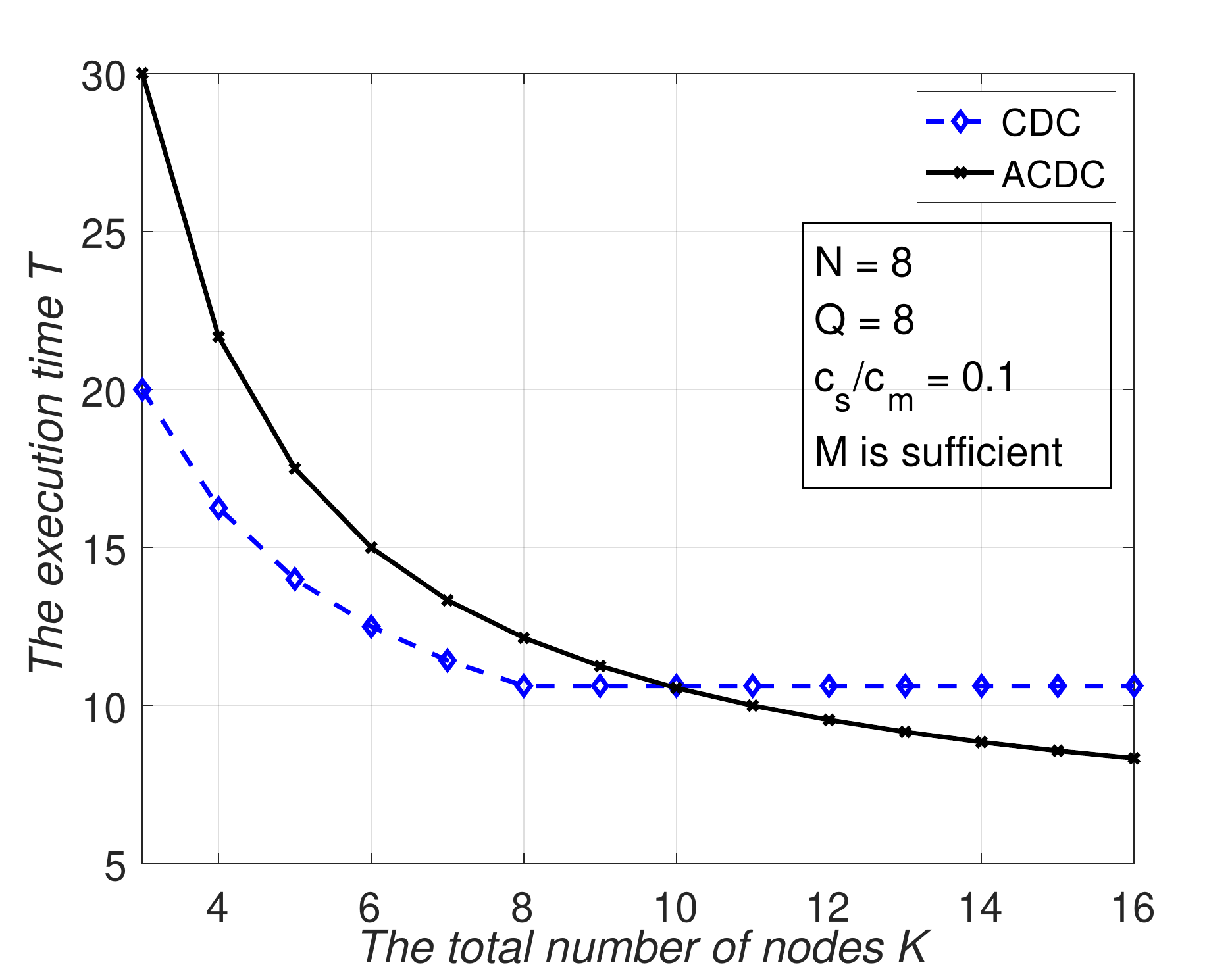} 
			\caption{Execution time for $s = 1$, neglecting the Reduce time.}
			\label{img} 
		\end{figure}
	
	%
	%

	\section{Main Result}\label{SecResults}
	
For some   nonnegative integers $r_1,r_2 \in \{0, \ldots, K\}$, let 
\begin{subequations} \label{EqL1} 
\begin{IEEEeqnarray}{rCl}
L_1(r_1,s,K)  &\triangleq & \mathop{\sum}\limits_{\ell = \max\{r_1+1,s\}}^{\min\{r_1+s,K\}} \frac{\binom{K}{\ell} \binom{\ell-1}{r_1} \binom{r_1}{\ell-s}}{ \binom{K}{r_1} \binom{K}{s}}  \frac{\ell}{\ell - 1}, ~~\label{LoadCDC}\\
L_2(r_2, s,K)  &\triangleq & \mathop{\sum}\limits_{\ell = \max\{r_2+1, s\}}^{\min\{r_2+s, K\}} \frac{\binom{K}{\ell} \binom{\ell - 1}{r_2} \binom{r_2}{\ell - s}}{\binom{K}{r_2} \binom{K}{s}}.~ \label{LoadACDC}
\end{IEEEeqnarray}
\end{subequations}
{
	\begin{theorem}\label{theorem T}
		For a MapReduce-type task with  system parameters $(K,M,c_\textnormal{m},c_\textnormal{s},c_\textnormal{r})$, task parameters $(N,Q,s)$ and a Reduce design $\mathcal{W}_\textnormal{symmetric}$,  the minimum execution time is 
		\begin{subequations}\label{general T} 
 \begin{IEEEeqnarray}{rCl}
	 T^*_\mathcal{W}(s) 
		=&& \min_{\substack{(\alpha, r_1,r_2,K_s,K_h)}} \!\!c_\textnormal{m}\cdot \max \left\{ \frac{\alpha r_1 \!+\!  \bar{\alpha}r_2}{K_s} ,\frac{\bar{\alpha}}{K_h}\right\}~~~~~\nonumber\\ 
		\!+&&\alpha  c_\textnormal{s} L^*_1(r_1,s,K_s)  \!+\! \bar{\alpha}c_\textnormal{s}  L^*_2(r_2,s,K_s)\!+\!c_\textnormal{r}\frac{sQ}{K_s},  ~~\\
		s.t. ~&&  K_s + K_h  \leq   K,~1 \leq K_s \leq Q,~~~~\label{eqConK}\\
		&&\alpha r_1 \!+\! \bar{\alpha}(r_2\!+\!1) \leq  \frac{M}{N},~0\leq \alpha\leq 1,\\
		&& r_1,r_2\in \{0,\ldots,K_s\},\label{eqConr1r2}
\end{IEEEeqnarray}
\end{subequations}
		 where  $\bar{\alpha}\triangleq 1-\alpha$, $L_1^*(r_1, s,K_s)$ is the lower convex envelope of the points $\{(r_1, L_1(r_1,s, K_s)): r_1\in \{0, \dots, K_s\}\}$, and  $L_2^*(r_2, s,K_s)$ is the lower convex envelope of the points $\{(r_2, L_2(r_2,s, K_s)): r_2\in \{0, \dots, K_s\}\}$. 
	\end{theorem}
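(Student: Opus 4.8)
The plan is to prove the theorem in two parts: an achievability argument exhibiting the hybrid scheme together with a resource-allocation strategy that meets the right-hand side of \eqref{general T}, and a matching information-theoretic converse valid for every scheme built on a weakly symmetric Reduce assignment $\mathcal{W}_\textnormal{symmetric}$. Since $\mathcal{W}_\textnormal{symmetric}$ pins down $|\mathcal{W}_i|=sQ/K_s$ for $i\in\mathcal{K}_s$ and $0$ otherwise, the Reduce time equals $c_\textnormal{r} sQ/K_s$ irrespective of the shuffle strategy, so both directions reduce to characterizing the optimal trade-off between the Map time $c_\textnormal{m}p$ and the Shuffle time $c_\textnormal{s}L$ under the storage budget \eqref{Storage}.

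For achievability I would partition the $K$ nodes into a reducer set $\mathcal{K}_s$ of size $K_s\le Q$ and a helper set $\mathcal{K}_h$ of size $K_h$ with $K_s+K_h\le K$, and split the intermediate-value array into two disjoint groups of relative sizes $\alpha$ and $\bar\alpha=1-\alpha$. On the $\alpha$-group I would run the cascaded CDC scheme of \cite{b1} with computation load $r_1$ over the $K_s$ reducers, and on the $\bar\alpha$-group the cascaded ACDC scheme over $\mathcal{K}_s\cup\mathcal{K}_h$ with load $r_2$. Because the two groups share no intermediate values, their shuffles run independently and the loads add, giving $L=\alpha L_1(r_1,s,K_s)+\bar\alpha L_2(r_2,s,K_s)$; applying standard memory-sharing across the integer values of $r_1$ and of $r_2$ then replaces $L_1,L_2$ by their lower convex envelopes $L_1^*,L_2^*$. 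Counting mappings, each reducer stores a fraction $\tfrac{\alpha r_1+\bar\alpha r_2}{K_s}$ of files and each helper a fraction $\tfrac{\bar\alpha}{K_h}$, so the peak load is the maximum of these two, matching $T_\textnormal{map}$; summing the per-file redundancies $\alpha r_1+\bar\alpha(r_2+1)$ and comparing with $M/N$ yields the storage constraint. Correctness follows from the correctness of CDC and ACDC on their respective groups, so every term of the objective in \eqref{general T} is achieved subject to \eqref{eqConK}--\eqref{eqConr1r2}.

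For the converse I would fix $\mathcal{W}_\textnormal{symmetric}$ (hence $K_s$ and $\mathcal{K}_s$) and an arbitrary storage profile $\{\mathcal{M}_k\}$, and lower-bound the communication load $L$ by an entropy argument in the spirit of the CDC and ACDC converses. The central step is a cut-set/induction bound over subsets of the reducer nodes in $\mathcal{K}_s$ that simultaneously treats the side information supplied by the non-reducing helpers: intermediate values are classified by how many reducer nodes and how many helper nodes map them, and the delivered message entropy is bounded below by summing the values that some reducer needs but has not mapped locally. I would then relate the resulting combinatorial quantities to effective computation loads—the intermediate values residing only on reducers give the $L_1$-type contribution, and those also residing on helpers give the $L_2$-type contribution, with the split playing the role of $\alpha$—and finally convexify and optimize over the storage profile to collapse the many parameters into $(\alpha,r_1,r_2,K_h)$. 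The convexification is precisely what produces the lower convex envelopes $L_1^*,L_2^*$, so the bound meets the achievable expression.

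The hard part will be the converse, specifically making the entropy/induction bound tight in the heterogeneous setting. Unlike the CDC and ACDC converses, where the storage is symmetric and every relevant intermediate value sits on the same number of nodes, here a general profile mixes reducer-only and reducer-plus-helper storage in arbitrary proportions, and one must show that the optimal trade-off nevertheless decomposes into exactly the two regimes $L_1^*$ and $L_2^*$ governed by a single sharing parameter $\alpha$. Establishing that this decomposition is \emph{forced}, rather than merely achievable, requires carefully combining the reducer-subset induction of \cite{b1} with the helper-aware counting of \cite{b2}, and then verifying that the convexified optimum coincides term-by-term with \eqref{general T}.
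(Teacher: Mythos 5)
Your achievability plan matches the paper's: split the files into an $\alpha$-group and a $\bar\alpha$-group, run CDC with load $r_1$ on the solvers and a cascaded ACDC with load $r_2$ using helpers for the shuffle, add the loads, and memory-share to get the envelopes $L_1^*,L_2^*$. One caveat even there: the ACDC scheme of \cite{b2} exists only for $s=1$, so "the cascaded ACDC scheme" you invoke is not off-the-shelf --- constructing it (helper-only multicasts of Vandermonde-coded combinations of the symbols $U_{k_h,\mathcal{S}_1}^{\mathcal{S}\setminus\mathcal{S}_1}$ to solver subsets $\mathcal{S}$, with decodability via the invertible submatrix) is itself a contribution of the paper and occupies all of Section~\ref{Scheme}; your proposal assumes it rather than proves it.

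The genuine gap is in the converse, exactly where you flag the "hard part" but supply no mechanism. The paper does \emph{not} run a new heterogeneous induction or "optimize over the storage profile to collapse" parameters; instead it (i) observes that $\mathcal{W}_\textnormal{symmetric}$ fixes $\mathcal{K}_s$ and $\mathcal{K}_h$ before any Map/Shuffle choices, (ii) for an \emph{arbitrary given} placement defines $b_{j,1},b_{j,2}$ (files on $j$ solvers and no helper, resp.\ at least one helper) and sets $\alpha,r_1,r_2$ as the deterministic averages in \eqref{conv4} --- so the decomposition is forced by definition, not extracted by optimization --- and (iii) merges all helpers into a single super node whose mapped values are shared without communication. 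This enhancement can only reduce the load, so the generic bound of Lemma~\ref{Lemma1} applied to the enhanced system is still a valid lower bound, and it cleanly splits $a_{j,d}$ into $a^1_{j,d}$ (weight $d/(j+d-1)$) and $a^2_{j,d}$ (weight $d/(j+d)$, the super node counting as one extra owner), which are computed exactly under $\mathcal{W}_\textnormal{symmetric}$ and then bounded per branch by Jensen's inequality in $j$, yielding $L^*\geq \alpha L_1(r_1,s,K_s)+\bar\alpha L_2(r_2,s,K_s)$ pointwise in the placement; the minimization in \eqref{general T} then dominates every placement. Without the super-node device (or a concrete substitute), your proposed "helper-aware counting" has no stated reason to produce the $d/(j+d)$ factor or the exact two-regime split, and the tightness you correctly identify as the crux remains unproven. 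You also omit how the Map-time term is matched in the converse: the paper lower-bounds the peak load by the per-group averages over $\mathcal{K}_s$ and $\mathcal{K}_h$ separately, which is what produces $\max\bigl\{(\alpha r_1+\bar\alpha r_2)/K_s,\ \bar\alpha/K_h\bigr\}$ in \eqref{eqMpTime}.
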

	\begin{proof}  See the achievability proof in Section \ref{Scheme}, and converse proof in  Section \ref{converse}.
	\end{proof}


	Letting $s=1$ in Theorem  \ref{theorem T}, we obtain  the following corollary:
	\begin{Corollary}
		For a MapReduce-type task with  system parameters $(K,M,c_\textnormal{m},c_\textnormal{s},c_\textnormal{r})$, task parameters $(N,Q,s=1)$ and a Reduce design $\mathcal{W}_\textnormal{symmetric}$, the minimum execution time is
	    \begin{IEEEeqnarray}{rcl}\label{TimeMixed}
				\notag T^*_\mathcal{W}(1) =&& c_\text{m}\cdot \max \left\{ \frac{\alpha r_1 +  \bar{\alpha} r_2}{K_s} ,\frac{\bar{\alpha}}{K_h}\right\}\!+ \! c_\textnormal{r}\frac{Q}{K_s}\nonumber\\
				&&+    \alpha c_\textnormal{s}\frac{1}{r_1}\left(\!1\! -\! \frac{r_1}{K_s}\right)\! +\! \bar{\alpha} c_\textnormal{s}  \frac{1}{r_2\!+\!1 } \!\left(\! 1\!-\! \frac{r_2}{K_s}\!\right),
		\end{IEEEeqnarray}
for some $(r_1,r_2,K_s,K_h,\alpha)$ satisfying constraints in (\ref{eqConK}--\ref{eqConr1r2}).		
	\end{Corollary}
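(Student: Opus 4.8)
The plan is to derive the Corollary directly from Theorem~\ref{theorem T} by specializing every quantity in the general formula to $s=1$; no fresh converse or achievability argument is required, since the case $s=1$ is already subsumed by the theorem. The only two ingredients that must be made explicit are the communication-load functions $L_1(r_1,s,K_s)$ and $L_2(r_2,s,K_s)$ defined in~\eqref{EqL1} and their lower convex envelopes $L_1^*$ and $L_2^*$.

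First I would substitute $s=1$ into~\eqref{EqL1}. The summation index then runs from $\max\{r_1+1,1\}=r_1+1$ up to $\min\{r_1+1,K_s\}=r_1+1$, so each sum collapses to the single term $\ell=r_1+1$ (respectively $\ell=r_2+1$). Evaluating that term at $\ell=r_1+1$ using $\binom{\ell-1}{r_1}=\binom{r_1}{r_1}=1$ and $\binom{r_1}{\ell-s}=\binom{r_1}{r_1}=1$, the binomial ratio $\binom{K_s}{r_1+1}/\binom{K_s}{r_1}=(K_s-r_1)/(r_1+1)$, the value $\binom{K_s}{1}=K_s$, and the factor $\ell/(\ell-1)=(r_1+1)/r_1$, yields
\begin{IEEEeqnarray}{rCl}
L_1(r_1,1,K_s) &=& \frac{1}{r_1}\left(1-\frac{r_1}{K_s}\right),\nonumber\\
L_2(r_2,1,K_s) &=& \frac{1}{r_2+1}\left(1-\frac{r_2}{K_s}\right).\nonumber
\end{IEEEeqnarray}
As a sanity check, these coincide exactly with the CDC and ACDC communication loads appearing in $T_\textnormal{CDC}$ and $T_\textnormal{ACDC}$ in Section~\ref{moti}.

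Next I would argue that the lower convex envelopes add nothing at the integer points used in the minimization, i.e.\ $L_1^*(r_1,1,K_s)=L_1(r_1,1,K_s)$ and $L_2^*(r_2,1,K_s)=L_2(r_2,1,K_s)$ for every integer $r_1,r_2\in\{0,\dots,K_s\}$. Writing $L_1(r_1,1,K_s)=1/r_1-1/K_s$ and $L_2(r_2,1,K_s)=(1+1/K_s)/(r_2+1)-1/K_s$, each is a positive multiple of the convex function $1/r_1$ (resp.\ $1/(r_2+1)$) plus a constant, hence convex on its domain; restricting a convex function to the integers gives a convex sequence, so every integer point lies on its own lower convex hull and the envelope operation is the identity there. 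Substituting these identities together with the collapsed loads into the objective of Theorem~\ref{theorem T} with $s=1$ (and noting $sQ/K_s=Q/K_s$) reproduces~\eqref{TimeMixed} term by term, under the same constraints~(\ref{eqConK}--\ref{eqConr1r2}).

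The one place that needs care is the envelope identification at the boundary $r_1=0$, where $L_1$ diverges: the CDC branch contributes meaningfully only when its weight $\alpha$ is paired with $r_1\ge1$, so I would verify that the optimizer is unaffected by excluding $r_1=0$ (equivalently, that the diverging value is never selected by the minimization). I expect this to be the main, though minor, obstacle; the binomial simplification and the convexity check are otherwise routine.
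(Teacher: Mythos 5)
Your proposal is correct and follows essentially the same route as the paper, which obtains the Corollary simply by setting $s=1$ in Theorem~\ref{theorem T}: with $s=1$ the sums in~\eqref{EqL1} collapse to the single index $\ell=r_1+1$ (resp.\ $\ell=r_2+1$), giving $L_1=\frac{1}{r_1}\bigl(1-\frac{r_1}{K_s}\bigr)$ and $L_2=\frac{1}{r_2+1}\bigl(1-\frac{r_2}{K_s}\bigr)$ exactly as you compute, and since these are convex in the integer variable the envelopes $L_1^*,L_2^*$ coincide with them at the integer points over which the theorem minimizes. Your extra care at $r_1=0$ (where $L_1$ diverges and only the choice $\alpha=0$ is relevant, consistent with the original CDC constraint $r_1\geq 1$) is a sound detail the paper leaves implicit.
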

	
	\begin{remark}\label{Improvment}
		The achievable execution time in Theorem \ref{theorem T}  strictly outperforms that of CDC and ACDC. This can be easily seen in Fig. 1, where the curve of $\min\{T_\textnormal{CDC}, T_\textnormal{ACDC}\}$ is   not   lower convex  (not convex in the regime $8\leq K\leq 12$), while the achievable execution time in \eqref{general T} is a lower convex function due to the   ``time sharing''  parameter $\alpha$.  
	\end{remark}

	\begin{remark} 
		When $K$ and $M$ are sufficiently large such that $K \ge K',M \ge M'$ where $K',M'$ are given in \eqref{ACDC optimal},  the ACDC scheme is optimal. On the   {contrary},  when the amount of resources  is not sufficiently large, the execution time of CDC could be shorter than that of ACDC, e.g., when $K \le \frac{M}{N} \le Q + 1$ and $\frac{c_\textnormal{s}}{c_\textnormal{m}} \le \frac{2}{K}$. This is identical to the result shown in Example \ref{eg1} given in Section \ref{moti}.

	\end{remark}

\begin{remark}\label{CDC}
		{ For the case $s=1$ with deficient resources  such that the CDC scheme is optimal, i.e., $\alpha^*=1$, 
	  if ignoring the integrality constraints and  the Reduce time $T_\textnormal{reduce}$, we derive the optimal choice of $r^*_1$ and   $K^*_s$:  
		\begin{align*}
		(r_1^*\!, K_s^*)\!=\!\! \left\{ 
		\begin{aligned}
		&(1,\mathsf{K}), \quad c_\textnormal{s}/c_\textnormal{m} \leq   1/ \mathsf{K},\\
		&(\sqrt{c_\textnormal{s}/c_\textnormal{m}},\mathsf{K}), ~ \mathsf{K} \le M/N ,1/ \mathsf{K} \le c_\textnormal{s}/c_\textnormal{m} \le \mathsf{K},\\
		&  \quad \quad \quad \quad~~\textnormal{or } \mathsf{K} \!\!\ge\!\! M\!/\!N , 1\!/ \mathsf{K} \!\!\le\!\! c_\textnormal{s}/c_\textnormal{m} \!\!\le\!\! M^2\!/\!(N^2\mathsf{K}),\\
		&(M/N,\mathsf{K}),  ~~ \mathsf{K} \!\!\ge\!\! M\!/\!N, M^2\!/\!(N^2\mathsf{K}) \!\!\le\!\! c_\textnormal{s}/c_\textnormal{m} \!\!\le\!\! M\!/\!N,\\
		&(M/N,M/N),~~ \mathsf{K} \ge M/N,c_\textnormal{s}/c_\textnormal{m} \ge M/N,\\
		&(\mathsf{K},\mathsf{K}),~~ \mathsf{K} \le M/N,c_\textnormal{s}/{c_\textnormal{m}} \ge 1/{\mathsf{K}},
		\end{aligned}
		\right.
		\end{align*}
		where $\mathsf{K} = \min\{K,Q\} $.
Similar result can be obtained for the optimal choice of $(r^*_2,K^*_s,K^*_h)$ if the ACDC scheme is optimal, and is skipped due   to   page limit. 
		}
	\end{remark}


	\section{General Achievable Scheme} \label{Scheme}
	In this section, we describe a hybrid scheme for the case $s\geq 1$ where each reduce function is computed by $s$ nodes.   
	
	Divide $K$ nodes into two disjoint sets $\mathcal{K}_s$ and $\mathcal{K}_h$, with $\mathcal{K}_s\subseteq\mathcal{K}$ and $\mathcal{K}_h\subset\mathcal{K}$.  Nodes in $\mathcal{K}_s$ are called ``solver'' nodes as they are responsible  for producing Reduce functions, and nodes in   $\mathcal{K}_s$ are called ``helper'' and don't  produce  any Reduce function.  		Let $K_s\triangleq|\mathcal{K}_s|$ and $K_h\triangleq|\mathcal{K}_h|$, then    $K_s+K_h\leq K$.   Let  $N_1 \triangleq \alpha N$ and $N_2 \triangleq  \bar{\alpha}N$, for some $\alpha \in [0,1]$.  Split $N$ input files into two disjoint groups $\mathcal{N}_1$ and $\mathcal{N}_2$ with  $|\mathcal{N}_1|=N_1$ and  $|\mathcal{N}_2|=N_2$. Assign them to nodes in $\mathcal{K}_s$ and  $\mathcal{K}_h$, respectively. Let $\mathcal{N}_1$   be  the assignment indices of files  on   nodes  in  $\mathcal{K}_s$, but not on any node in  $\mathcal{K}_h$, i.e.,
	\[\mathcal{N}_1=\{n: n \in \cup_{k_s\in\mathcal{K}_s}\mathcal{M}_{k_s}, n\notin \cup_{k\in\mathcal{K}_h} \mathcal{M}_k \},
	\] and 
	  $\mathcal{N}_2$ be  the assignment indices of files  on   nodes $\mathcal{K}_h$, i.e.,
 		\[\mathcal{N}_2\triangleq\cup_{k\in\mathcal{K}_h } \mathcal{M}_{k}=\mathcal{N}\backslash \mathcal{N}_1,\]
where the last equality holds because every file must be mapped by at least one  node in $\mathcal{K}_s\cup \mathcal{K}_h$.


The key idea is as follows:	 In Subsystem 1, each  file $\omega_n$ is mapped by the solver node $k_s\in\mathcal{K}_s$ if  $n\in\mathcal{M}_{k_s}\cap \mathcal{N}_1$. The corresponding mapped intermediate values are exchanged   among nodes in $\mathcal{K}_s$ during the Shuffle phase. In Subsystem 2, each  file $\omega_n$ is mapped by node $k\in\mathcal{K}_s\cup \mathcal{K}_h$ if  $n\in\mathcal{M}_{k}\cap \mathcal{N}_2$, and the resulted intermediate values are only transferred by nodes in  $\mathcal{K}_h$. After the  Shuffle phase,  the solver nodes in $\mathcal{K}_s$ reconstruct the desired intermediate values from the two subsystems and produce the assigned Reduce functions.

	
In more detail,	the Map and Shuffle processes Subsystem 1  are identical to that in the CDC scheme introduced in \cite{b1}, but with  computing nodes $\mathcal{K}_s$ and {input} files $\mathcal{N}_1$. Thus, with a peak computation load $r_1 \in \{1, \ldots, K_s\}$, the communication load of Subsystem 1 is 
	\begin{IEEEeqnarray}{rCl}
	L_1(r_1,s,K_s)  &\triangleq & \mathop{\sum}\limits_{\ell = \max\{r_1+1,s\}}^{\min\{r_1+s,K_s\}} \frac{\binom{K_s}{\ell} \binom{\ell-1}{r_1} \binom{r_1}{\ell-s}}{ \binom{K}{r_1} \binom{K_s}{s}}  \frac{\ell}{\ell - 1}.~~
\end{IEEEeqnarray}
	
	For Subsystem 2, we use a generalized ACDC scheme which extends the idea in \cite{b2} proposed for case $s=1$ to the  cascaded case ($s \ge 1$).    Denote $L_2(r_2,s,K_s)$ as the communication load of  Subsystem 2 with peak computation load  $r_2$, for some nonnegative integer $ r_2\in\{0,\ldots, K_s\}$.    
	For the trivial case $r_2 = K_s$, each file is mapped by all the $K_s$ solvers so there is no need for data shuffle, resulting in $L_2 = 0$.

	\subsubsection{Map phase}
 Firstly, divide input files $\mathcal{N}_2$ evenly into $K_h\binom{K_s}{r_2}$ disjoint 
 batches of size  $\eta_1\triangleq\frac{N_2}{K_h\binom{K_s}{r_2}}$, each corresponding to a subset $T\subset \mathcal{K}_s$ and index $k\in\mathcal{K}_h$, i.e.,
 \begin{IEEEeqnarray}{rCl}
\mathcal{N}_2=\cup_{k\in\mathcal{K}}\cup_{\mathcal{T}\subset\mathcal{K}_s:|\mathcal{T}|=r_2 }\mathcal{B}_{\mathcal{T},k},
\end{IEEEeqnarray}
 where    $\mathcal{B}_{\mathcal{T},k}$ denotes the batch of $\frac{N_2}{K_h\binom{{K}_s}{r_2}}$ files about the subset $\mathcal{T}$ and index $k$. Each solver node  $k_s\in \mathcal{K}_s$ maps files in $\mathcal{B}_{\mathcal{T},k}$ if $k_s\in {\mathcal{T}}$,  $\forall k\in\mathcal{K}_h$. 
  Each helper node $k_h\in\mathcal{K}_h$  maps    files in $\mathcal{B}_{\mathcal{T},k_h}$ for all $\mathcal{T}$. 
  
  After the Map phase, each solver node $k_s\in\mathcal{K}_s$ obtains  local intermediate values $\{v_{q,n} : q \in\mathcal{Q},  n \in \mathcal{M}_{k_s}\cap \mathcal{N}_2\}$ with $\left| \mathcal{M}_{k_s} \cap \mathcal{N}_2\right| = K_h\binom{K_s-1}{r_2-1} \eta_{1} = \frac{r_2 N_2}{K_s}$,  and each helper node $k_h\in \mathcal{K}_h$ obtains $\{v_{q,n} : q \in\mathcal{Q},  n \in \mathcal{M}_{k_h}\cap \mathcal{N}_2$ with $\left| \mathcal{M}_{k_h}\cap \mathcal{N}_2\} \right| = \binom{K_s}{r_2} \eta_{1} = \frac{N_2}{K_h}$.

	\subsubsection{Reduce phase}
	Divide $Q$ Reduce functions  evenly into $\binom{K_s}{s} $ disjoint groups, i.e., each group contains $\eta_2\triangleq\frac{Q}{\binom{K_s}{s}}$ functions and corresponds to a subset $\xi$ of size $s$. $\mathcal{R}_{\xi}$ denotes the group of Reduce functions computed exclusively by the solvers in $\xi$. Given the allocation above, for each solver $k_s \in \mathcal{K}_s$, if $k_s \in \xi$, it maps all the files in $\mathcal{R}_{\xi}$. Each solver is in $\binom{K_s-1}{s-1}$ subsets of size $s$, thus it is responsible for computing $\left| \mathcal{W}_{k_s} \right| = \binom{K_s-1}{s-1} \eta_{2} = \frac{sQ}{K_s}$ Reduce functions, for all $k_s \in \mathcal{K}_s$.

	\subsubsection{Shuffle phase}
	Only the helpers shuffle, solvers just receive messages from helpers and decode them to recover needed intermediate values
	
	For a subset $\mathcal{S} \subseteq \mathcal{K}_s$, and a subset $\mathcal{S}_1 \subset \mathcal{S} : \left| \mathcal{S}_1 \right| = r_2$, {denote the set of intermediate values required by all solvers in $\mathcal{S} \setminus \mathcal{S}_1$  while exclusively known by both the helper $k_h \in \mathcal{K}_h$ and  $r_2$ solvers in $\mathcal{S}_1$ as $V_{k_h,\mathcal{S}_1}^{\mathcal{S} \setminus \mathcal{S}_1}$,} i.e.,
	{\begin{align} \label{eqV}
		&V_{k_h,\mathcal{S}_1}^{\mathcal{S} \setminus \mathcal{S}_1} \triangleq \{v_{q,n}: q \in \mathop{\cap}\limits_{k_s \in {\mathcal{S} \setminus \mathcal{S}_1}} \mathcal{W}_{k_s}, q \notin \mathop{\cup}\limits_{k_s \notin \mathcal{S} } \mathcal{W}_{k_s}, 
		\notag
		\\& w_n \in \mathop{\cap}\limits_{k_s \in {\mathcal{S}_1}} \mathcal{M}_{k_s} \bigcap \mathcal{M}_{k_h}, w_n \notin \mathop{\cup}\limits_{k_s \notin \mathcal{S}_1} \mathcal{M}_{k_s} \bigcup \mathcal{M}_{k_h}\}.
		\end{align}}
	
	Similarly, there are $\binom{r_2}{\left| \mathcal{S} \right| - s}\eta_2$ output functions with needed intermediate values only in $V_{k_h,\mathcal{S}_1}^{\mathcal{S} \setminus \mathcal{S}_1}$. So $V_{k_h,\mathcal{S}_1}^{\mathcal{S} \setminus \mathcal{S}_1}$ contains  $\binom{r_2}{\left| \mathcal{S} \right| - s}\eta_2\eta_1$ intermediate values.
	
	\par \textit{a) Encoding:} 
	Create a symbol $ U_{k_h,\mathcal{S}_1}^{\mathcal{S}\setminus\mathcal{S}_1}\! \!\!\in\!\! \mathbb{F}_{2^{\binom{r_2}{\!\left| \mathcal{S} \right| - s}\eta_2\eta_1 T\!}}$ by concatenating all  intermediate values in $ V_{k_h,\mathcal{S}_1}^{\mathcal{S}\setminus\mathcal{S}_1}$.  Given the set $\mathcal{S}$, there are $n_1 \triangleq \binom{\left| \mathcal{S} \right|}{r_2}$ subsets each with cardinality of $r_2$. Denote these sets as $\mathcal{S}[1],\ldots,\mathcal{S}[n_1]$, and the corresponding message symbols are  $U_{k_h,\mathcal{S}[1]}^{\mathcal{S} \setminus \mathcal{S}[1]}, U_{k_h,\mathcal{S}[2]}^{\mathcal{S} \setminus \mathcal{S}[2]}, \ldots, U_{k_h,\mathcal{S}[n_1]}^{\mathcal{S} \setminus \mathcal{S}[n_1]}$. After the Map phase, the   helper $k_h$ knows all the intermediate values needed by the solvers in $(\mathcal{S}, k_h)$, so it broadcasts $n_2 \triangleq \binom{\left| \mathcal{S} \right| - 1}{r_2}$ linear combinations of the $n_1$ message symbols to the solvers in $\mathcal{S}$, denoted by $Y_{k_h}^{\mathcal{S}} \left[ 1 \right], Y_{k_h}^{\mathcal{S}} \left[ 2 \right], \ldots, Y_{k_h}^{\mathcal{S}} \left[ n_2 \right], $ for some coefficients $\alpha_1, \alpha_2, \ldots, \alpha_{n_1}$ distinct from one another and $\alpha_i \in \mathbb{F}_{2^{\binom{r_2}{\left| \mathcal{S} \right| - s}\eta_1 \eta_2 T}} $ for all $i = 1, \ldots, n_2$, i.e., 
$~~\begin{bmatrix}
	Y_{k_h}^{\mathcal{S}}[1]\\
	Y_{k_h}^{\mathcal{S}}[2]\\
	\vdots\\
	Y_{k_h}^{\mathcal{S}}[n_2]
	\end{bmatrix}
	\!=\!
	\begin{bmatrix}
	1 & 1 & \cdots & 1 \\
	\alpha_1 & \alpha_2 & \cdots & n_1 \\
	\vdots & \vdots & \ddots & \vdots \\
	\alpha_1^{n_2-1} & \alpha_2^{n_2-1} & \cdots & \alpha_{n_1}^{n_2-1} \\
	\end{bmatrix}\!\!
	\begin{bmatrix}
	U_{k_h,\mathcal{S}[1]}^{\mathcal{S} \setminus \mathcal{S}[1]}\\
	U_{k_h,\mathcal{S}[2]}^{\mathcal{S} \setminus \mathcal{S}[2]}\\
	\vdots\\
	U_{k_h,\mathcal{S}[n_1]}^{\mathcal{S} \setminus \mathcal{S}[n_1]}
	\end{bmatrix}\!\!.$
	\par \textit{b) Decoding: }
	Since each solver $k_s \in \mathcal{S}$ with helper $k_h$ is in $\binom{\left| \mathcal{S} \right| - 1}{r_2 - 1}$ subsets of $\mathcal{S}$ with size $r_2$, so it knows $\binom{\left| \mathcal{S} \right| - 1}{r_2 - 1}$ of the message symbols. When node $k_{s}$  receives the messages from $k_{h}$, it removes the known segments from each $Y_{k_h}^{\mathcal{S}}[i]$, generating new message $Z_{k_h}^{\mathcal{S}}[i]$ with only $n_1 - \binom{\left| \mathcal{S} \right| - 1}{r_2 - 1} = \binom{\left| \mathcal{S} \right|}{r_2} - \binom{\left| \mathcal{S} \right| - 1}{r_2 - 1} = \binom{\left| \mathcal{S} \right| - 1}{r_2} = n_2$ message symbols. So, there are $n_{2}$ new messages and an invertible Vandermonde matrix which is a submatrix of the encoding matrix above. As a result, the node $k_{s}$  {can} decode the rest $n_{2}$ message symbols, obtaining all  intermediate  {values} needed from $(\mathcal{S}, k_{h})$. 
	
	In the Shuffle phase, for each subset $\mathcal{S} \subseteq\mathcal{K}_s$ of size $\max\{r_2+1, s\} \leq \left| \mathcal{S} \right| \leq \min\{r_2+s, K_s\}$, each helper node multicasts $ n_2 = \binom{\left| \mathcal{S} \right| - 1}{r_2}$ message symbols to the solvers in $\mathcal{S}$, each message symbol containing $\binom{r_2}{\left| \mathcal{S} \right| - s} T $ bits, and there are $K_{h}$ helpers multicasting such message symbols. Therefore, the communication load of Subsystem 2    is 
	\begin{IEEEeqnarray}{rCl}
	L_2(r_2, s,K_s)  &= & \mathop{\sum}\limits_{\ell = \max\{r_2+1, s\}}^{\min\{r_2+s, K_s\}} \frac{\binom{K_s}{\ell} \binom{\ell - 1}{r_2} \binom{r_2}{\ell - s}}{\binom{K_s}{r_2} \binom{K_s}{s}}.
\end{IEEEeqnarray}
	
	{ Based on the scheme described above and according to  Definition \ref{DefTime}, we obtain the total number of stored files among $K$ nodes as
	\begin{IEEEeqnarray}{rCl}\label{Storage2}
	\sum_{k_s\in\mathcal{K}_s}|\mathcal{M}_{k_s}|+\sum_{k_h\in\mathcal{K}_h}|\mathcal{M}_{k_s}|&=&\alpha r_1N+\bar{\alpha}r_2N, 
\end{IEEEeqnarray}
and the
	Map, Shuffle and Reduce time as 
	\begin{IEEEeqnarray}{rCl}
	T_\text{map}&=&c_\textnormal{m} \max \left\{ \frac{\alpha r_1 +  \bar{\alpha} r_2}{K_s} ,\frac{\bar{\alpha} }{K_h}\right\},\\
	T_\text{shuffle}&=& \alpha  c_\textnormal{s} L_1(r_1,s,K_s)  +\bar{\alpha} c_\textnormal{s}  L_2(r_2,s,K_s),\\
	T_\text{reduce}&=& c_\textnormal{r}\frac{sQ}{K_s}.
\end{IEEEeqnarray}
By storage constraint in \eqref{Storage} and \eqref{Storage2}, we have
\begin{IEEEeqnarray}{rCl}
\alpha r_1N+\bar{\alpha}r_2N\leq M. 
\end{IEEEeqnarray}
	Since we focus on the case  where the Map, Shuffle and Reduce phases proceed  in a sequential fashion, we have
	\begin{IEEEeqnarray}{rCl}
T_{\mathcal{W}}(s)&= & T_\textnormal{map} +T_\textnormal{shuffle} +T_\textnormal{reduce} \nonumber\\
		&= &  	c_\text{m} \max \left\{\alpha \frac{r_1}{K_s} \!+ \!\bar{\alpha} \frac{r_2}{K_s},\frac{1\!-\! \alpha}{K_h}\right\} +\! c_\textnormal{r}\frac{sQ}{K_s}
		\nonumber\\&& +c_\textnormal{s} \alpha L_1(r_1, s,K_s)+ c_\textnormal{s} \bar{\alpha}   L_2(r_2, s,K_s) ,
\end{IEEEeqnarray}
which completes the achievability proof of Theorem \ref{theorem T}. }

	\section{Converse proof}\label{converse}
	
	\subsection{Lower bound of communication load}
	We first introduce a lemma presented in \cite{b2}.
	\begin{Lemma}\label{Lemma1}
	Consider a distributed computing task with  $N$ input files, $Q$ Reduce functions, a  file placement $\{\mathcal{M}_k\}_{k=1}^K$ and   a Reduce design  $\{\mathcal{W}_k\}_{k=1}^K$ that  uses  $K$  nodes.
	Let $a_{j,d}$ denote the number of intermediate values that are available at $j$ nodes and required by (but not available at) $d$ nodes. The following lower bound on the peak communication load holds:
\begin{IEEEeqnarray}{rCl}
	L^* \ge \frac{1}{QN} \sum_{j= 1}^{K}\sum_{d = 1}^{K - j} a_{j,d}\frac{d}{j + d - 1}.
\end{IEEEeqnarray}

	\end{Lemma}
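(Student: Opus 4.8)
The plan is to pass from the combinatorial quantity to an entropy bound and then charge every intermediate value its correct \emph{coded} cost. First I would observe that, since each message $X_k$ lives in $\mathbb{F}_{2^{\ell_k}}$, we have $\ell_k \ge H(X_k)$ in bits, so by the sub-additivity of entropy $\sum_{k\in\mathcal{K}}\ell_k \ge H(X_1,\dots,X_K)$ and hence $NQV\cdot L^\ast \ge H(X_1,\dots,X_K)$. The statement then reduces to establishing
\begin{equation*}
H(X_1,\dots,X_K) \ \ge\ V\sum_{j=1}^{K}\sum_{d=1}^{K-j} a_{j,d}\,\frac{d}{j+d-1},
\end{equation*}
because each intermediate value carries exactly $V$ bits, so this $V$ cancels against the $NQV$ in the denominator and leaves precisely the bound of Lemma \ref{Lemma1}.

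To bound the joint entropy I would use the decodability constraints together with a recursion over subsets of nodes, in the spirit of the converse of \cite{b1,b2}. Writing $\mathcal{V}_{\mathcal{A}}$ for the collection of intermediate values mappable from the files stored at the nodes of $\mathcal{A}$, the defining property is that every node $k$ reconstructs $\{v_{q,n}: q\in\mathcal{W}_k\}$ from $(X_1,\dots,X_K)$ together with $\mathcal{V}_{\{k\}}$. Peeling one node at a time out of a subset $\mathcal{S}$ and conditioning on the stored values of the nodes already removed, each removed node forces the remaining messages to still carry the intermediate values it needs whose only surviving copies lie inside $\mathcal{S}$. Since entropy is additive across independent intermediate values, it suffices to track the contribution of a single value and then sum over the $a_{j,d}$ values in each class; crucially, the contribution of any one value must stay at most $V$ bits, since coded multicasting lets a single transmission serve several receivers at once.

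The delicate part, and the real content of the lemma, is showing that a value available at the $j$ nodes $A$ and needed by the $d$ nodes $D$ (disjoint from $A$) contributes exactly the fraction $\tfrac{d}{j+d-1}$ of its $V$ bits, rather than a naive per-receiver count, which would exceed $V$ and thus fail to be a valid lower bound. I expect this coefficient to be the main obstacle. It is obtained by averaging the peeling inequality over node orderings while designating one owner in $A$ as a redundant reference: among the remaining $j+d-1$ relevant nodes of $A\cup D$ (namely $j-1$ owners and $d$ needers), the value is charged precisely when a needer is encountered before any owner, an event of probability $\tfrac{d}{j+d-1}$. This is exactly where the coded gain is captured, since the $-1$ in the denominator reflects that one multicast symbol simultaneously discharges all of $D$; it can be sanity-checked against the CDC case $d=1$, where it reduces to $\tfrac1j$ and matches the $\tfrac1r$ multicast gain of \cite{b1}. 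Summing $\tfrac{d}{j+d-1}V$ over all $a_{j,d}$ values and dividing by $NQV$ then yields Lemma \ref{Lemma1}.
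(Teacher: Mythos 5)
You should first note that the paper does not actually prove Lemma~\ref{Lemma1}: it imports it verbatim from \cite{b2}, where it is established by an induction over subsets of nodes in the style of the cascaded-CDC converse of \cite{b1}. Your front end coincides with that proof's front end: $\ell_k \ge H(X_k)$, sub-additivity, the (implicit, and worth stating) assumption that intermediate values are i.i.d.\ uniform so each carries $V$ bits and entropies add across values, and the reduction to a per-value charge summed over the classes $a_{j,d}$. So everything rests on your derivation of the coefficient $\frac{d}{j+d-1}$, and that is where there is a genuine gap.

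Made precise, the peeling you describe is: fix an ordering $\sigma$ of the nodes, remove them one at a time, at each removal condition on the removed node's mapped values $V_{\mathcal{M}_k}$ and decoded values, and charge $V$ bits for every value the removed node needs that is not yet in the conditioning. For a value available at a $j$-node set $A$ and needed by a $d$-node set $D$, this charges exactly when the first node of $A \cup D$ under $\sigma$ lies in $D$; averaging over orderings therefore gives $\frac{d}{j+d}$, strictly weaker than the claimed $\frac{d}{j+d-1}$. Your repair --- designating one owner in $A$ as a ``redundant reference'' and racing only the remaining $j+d-1$ nodes --- produces the right number in expectation but corresponds to no valid entropy inequality: in the chain rule you must condition on the \emph{entire} $V_{\mathcal{M}_k}$ of a removed owner, which reveals the value whether or not that owner was designated, so the charge is lost in precisely the orderings your accounting retains. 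The smallest case already exhibits the failure: $K=2$, one value mapped only at node $1$ and reduced only at node $2$, where the coefficient must be $\frac{d}{j+d-1}=1$; the ordering that removes node $1$ first charges zero under the peeling, yet the true bound $H(X_1)\ge V$ holds because $X_2=\psi_2(V_{\mathcal{M}_2})$, so that $H(X_1,X_2 \mid V_{\mathcal{M}_2}) = H(X_1 \mid V_{\mathcal{M}_2}) \ge V$ by decodability at node $2$. The ``$-1$'' is harvested exactly from this functional dependence of each message on the sender's own mapped values, and making that work for all values simultaneously --- different values have different owner sets, so no single conditioning order can exempt one owner per value --- is the real content of the induction in \cite{b1} and its asymmetric variant in \cite{b2}. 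Your sketch correctly names this ingredient (and your sanity check $d=1 \mapsto \frac{1}{j}$ confirms the target), but it never performs the step; as written, the argument you can actually complete establishes only the weaker bound with coefficient $\frac{d}{j+d}$.
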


	{{ For any scheme with a Reduce design $\{\mathcal{W}_{k}\}_{k=1}^{K}$, each  node either produces Reduce functions or not. Thus, the nodes can  be characterized into two categories: $\mathcal{K}_h$ containing nodes who will not perform any Reduce function   and $\mathcal{K}_s$  containing the remaining   nodes, i.e.,  
	\begin{IEEEeqnarray}{rCl}
	\mathcal{K}_h \triangleq\{k:\mathcal{W}_k=\emptyset,k\in\mathcal{K}\},  \mathcal{K}_s \triangleq \mathcal{K} \backslash \mathcal{K}_h.
\end{IEEEeqnarray}
 Let  $K_h\triangleq| \mathcal{K}_h|$ and $K_s\triangleq |\mathcal{K}_s|$. Note that   $\mathcal{K}_h$ and $\mathcal{K}_s$ are fixed  once the Reduce design  $\{\mathcal{W}_{k}\}_{k=1}^{K}$ is given, independent of the Map and Shuffle operations.   

 For any scheme with a file placement $\{\mathcal{M}_{k}\}_{k=1}^K$, 	let $\mathcal{N}_1$   be  the assignment indices of files  on   nodes  in  $\mathcal{K}_s$, but not on any node in  $\mathcal{K}_h$, i.e.,
	\[\mathcal{N}_1=\{n: n \in \cup_{k_s\in\mathcal{K}_s}\mathcal{M}_{k_s}, n\notin \cup_{k\in\mathcal{K}_h} \mathcal{M}_k \},
	\] and 
	  $\mathcal{N}_2$ be  all assignment indices of files  on   nodes in $\mathcal{K}_h$, i.e.,
		\[\mathcal{N}_2\triangleq\cup_{k\in\mathcal{K}_h } \mathcal{M}_{k}.\]
Every file must be mapped by at least one node in $\mathcal{K}=\mathcal{K}_s\cup \mathcal{K}_h$, indicating that  $\mathcal{N}_2=\mathcal{N}\backslash \mathcal{N}_1$. 

   Let $b_{j,1}$ be the number of files  which are stored at $j$ nodes in $\mathcal{K}_s$, but  not   at any node in $\mathcal{K}_h$, then we have $|\mathcal{N}_1| =\sum_{j = 0}^{K_s}b_{j,1}$.  Let   $b_{j,2}$ be the number of files stored at  $j$ nodes in $\mathcal{K}_s$ and at least one node in  $\mathcal{K}_h$ at the same time, then  we have   
	$|\mathcal{N}_2| = \sum_{j = 0}^{K_s}b_{j,2}.$  	Let
	\begin{align}
	&\alpha \triangleq \mathop{\sum}\limits_{j=0}^{K} \frac{b_{j,1}}{N},  ~ r_1 \triangleq \mathop{\sum}\limits_{j=0}^{K} \frac{jb_{j,1}}{\alpha N},   ~ r_2 \triangleq \mathop{\sum}\limits_{j=0}^{K} \frac{jb_{j,2}}{\bar{\alpha} N}. \label{conv4}
	\end{align}
	Since $|\mathcal{N}_1| + |\mathcal{N}_2| = N$, we have $\bar{\alpha}= \frac{|\mathcal{N}_2|}{N}= \mathop{\sum}_{j=0}^{K} \frac{b_{j,2}}{N}$.  By storage {constraint} in \eqref{Storage} and \eqref{conv4}, we have 
	\begin{IEEEeqnarray*}{rCl}
	 \sum_{k \in\mathcal{K}} |\mathcal{M}_{k}|=\mathop{\sum}\limits_{j=0}^{K} {jb_{j,1}}+\mathop{\sum}\limits_{j=0}^{K} {jb_{j,2}}=\alpha r_1N+\bar{\alpha}r_2N\leq M.
\end{IEEEeqnarray*}

	   Similar to \cite{b2}, we introduce an enhanced distributed computing system by merging  all nodes in $ \mathcal{K}_h$ into a super node such that all  files in $\mathcal{N}_2$ can be evenly mapped by nodes in $ \mathcal{K}_h$ in parallel,  and the mapped intermediate values can  be shared without data shuffle.

 For this enhanced system,	let $a^1_{j,d}$ be the number of intermediate values that are known by $j$ nodes in $\mathcal{K}_s$, not mapped by the super  node, and needed by (but not available at) $d$ nodes;  $a^2_{j,d}$ be the number of intermediate values that are mapped both by $j$ nodes in $\mathcal{K}_s$ and the super node, and needed by (but not available at) $d$ nodes in $\mathcal{K}_s$. 
	
	According to Lemma \ref{Lemma1}, we have 
	\begin{IEEEeqnarray}{rCl}
		L^* &\geq &   \frac{1}{QN} \sum_{j= 1}^{K}\sum_{d = 1}^{K - j} a_{j,d}\frac{d}{j + d - 1}\nonumber\\
		&\stackrel{(a)}{=}& {\sum_{j=1}^{K_s} \sum\limits_{d=1}^{K_s-j} \frac{a^1_{j,d}}{QN}  \frac{d}{j+d-1}}\!+\!{\sum_{j=1}^{K_s+1} \sum_{d=1}^{K_s+1-j} \frac{a^2_{j,d}}{QN}  \frac{d}{j+d-1}}\nonumber\\ 
		&=& \frac{1}{QN} \mathop{\sum}\limits_{j=0}^{K_s} \mathop{\sum}\limits_{d=\max\{1,s\!-\!j\}}^{\min\{s,K_s\!-\!j\}} \left(a^1_{j,d} \frac{d}{j\!+\!d\!-\!1} \!+\! a^2_{j,d} \frac{d}{j+d}\right)\nonumber\\
		&=&\alpha L_1^* +\bar{\alpha}  L_2^*\label{eqLL1L2},
	\end{IEEEeqnarray}
		where     equality (a) holds by  definitions of $a^1_{j,d}$ and  $a^2_{j,d}$, and 
		\begin{IEEEeqnarray}{rCl}
		 L_1^*&\triangleq& \frac{1}{\alpha NQ} \mathop{\sum}\limits_{j=0}^{K_s} \mathop{\sum}\limits_{d=\max\{1,s-j\}}^{\min\{s,K_s-j\}} a^1_{j,d} \frac{d}{j\!+\!d\!-\!1}, \label{eqL1L1}\\
		  L_2^*&\triangleq& \frac{1}{\bar{\alpha} NQ} \mathop{\sum}\limits_{j=0}^{K_s} \mathop{\sum}\limits_{d=\max\{1,s-j\}}^{\min\{s,K_s-j\}} a^2_{j,d} \frac{d}{j\!+\!d}\label{eqL2L2}.
\end{IEEEeqnarray}

\begin{remark}
The lower bound in \eqref{eqLL1L2} is valid  for  all  schemes with any file placement $\{\mathcal{M}_{k}\}_{k=1}^{K}$ and Reduce design $\{\mathcal{W}_{k}\}_{k=1}^{K}$.  Although  \eqref{eqLL1L2} has a form similar to the  time sharing scheme, the parameter $\alpha$   is defined in \eqref{conv4} and can not be changed once $\{\mathcal{W}_{k}\}_{k=1}^{K}$ and $\{\mathcal{M}_{k}\}_{k=1}^{K}$ are given. 
 \end{remark}

Now we compute $a^1_{j,d}$ and  $a^2_{j,d}$ when using the weakly symmetric reduce assignment $\mathcal{W}_\textnormal{symmetric}$  described  in Definition \ref{DefFilePm}.

	First consider the simple case where $s=1$.	In this case, each Reduce function is  mapped by only one node in  $\mathcal{K}_s$. Since each node  $k\in\mathcal{K}_s$ requires the intermediate values $v_{q,n}$, for all $q\in\mathcal{W}_k, n\notin \mathcal{M}_k$, and $|\mathcal{W}_i|=\frac{Q}{{K_s}}, \mathcal{W}_i\cap \mathcal{W}_j=\emptyset$, for all $i,j\in\mathcal{K}_s$,  we have  
	\begin{IEEEeqnarray}{rCl}\label{eqn1n2s}
	a^1_{j,d}=\frac{Q}{{K_s}}b_{j,1} ({K_s-j}), \quad a^2_{j,d}=\frac{Q}{{K_s}}b_{j,2} ({K_s-j}).
\end{IEEEeqnarray}
	
	For the general case  $s\geq 1$, recall that the Reduce design $\mathcal{W}_\textnormal{symmetric}$ assigns  all $Q$ Reduce functions  symmetrically  to nodes in $\mathcal{K}_s$, and each node $k\in\mathcal{K}_s$ computes the  Reduce functions whose indices are in the batch  $\mathcal{D}_{\mathcal{P}}$  if $k\in \mathcal{P}\subseteq\mathcal{K}_s$. Consider a file $w_n\in\mathcal{N}_1$ that is exclusively known by $j$ nodes in $\mathcal{K}_s$, and   denote these $j$ nodes as $\mathcal{K}_{s,[j]}$. Since  nodes in $\mathcal{K}\backslash \mathcal{K}_{s,[j]}$ don't access file $w_n$, there are in total $\binom{K-j}{d}$ groups of nodes of  size $d$, and each group requires $\binom{j}{s-d}\frac{Q}{\binom{K_s}{s}}=\binom{j}{j+d-s}\frac{Q}{\binom{K_s}{s}}$  intermediate values generated by $\omega_n$. Because  there are in total $b_{j,1}$ numbers of such file $w_n$, we have 
	\begin{subequations}\label{eqn1n2}
	\begin{IEEEeqnarray}{rCl}
	a^1_{j,d}&=&\frac{Q}{\binom{K_s}{s}}b_{j,1} \binom{K_s-j}{d} \binom{j}{j+d-s}.
\end{IEEEeqnarray}
With a similar analysis, we have
\begin{IEEEeqnarray}{rCl}
	a^2_{j,d}&=&\frac{Q}{\binom{K_s}{s}}b_{j,2} \binom{K_s-j}{d} \binom{j}{j+d-s}.
\end{IEEEeqnarray}
\end{subequations}		
   
   In view of the fact that    \eqref{eqn1n2} is consistent with \eqref{eqn1n2s} when $s$ equals 1,  we derive the lower bound on the  execution time  $T^*_{\mathcal{W}}(s)$ based on (\ref{conv4}--\ref{eqL2L2}) and \eqref{eqn1n2}. 

		  Let $\ell=j+d$ and substitute  \eqref{eqn1n2} into   \eqref{eqL1L1} and  \eqref{eqL2L2},  we have 
	\begin{IEEEeqnarray}{rCl}
		 L_1^*
		&=& \frac{1}{\binom{K}{s}} \mathop{\sum}\limits_{j=0}^{K} \mathop{\sum}\limits_{\ell=\max\{j+1,s\}}^{\min\{j+s,K\}} \frac{b_{j,1}}{\alpha N} \binom{K-j}{\ell-j} \binom{j}{\ell-s} \frac{\ell-j}{\ell-1}, ~\nonumber\\ \label{eqL1}
		L_2^*&=& 
		\mathop{\sum}\limits_{j=0}^{K} \mathop{\sum}\limits_{\ell=\max\{j+1,s\}}^{\min\{j+s,K\}} \frac{b_{j,2}}{\binom{K}{s}\bar{\alpha} N} \binom{K-j}{\ell-j} \binom{j}{\ell\!-\!s} \frac{\ell\!-\!j}{\ell}.  \label{eqL2}
			\end{IEEEeqnarray}
The lower bounds of $ L_1^*$ and $ L_2^*$ are illustrated  as follows. 

	\subsubsection{The Lower Bound of $L^*_1$}
	Since the $\binom{K_s\!-\!j}{\ell\!-\!j} \binom{j}{\ell-s} \frac{\ell-j}{\ell-1}$ is convex with respect to $j$, by Jensen's inequality, we have
	\begin{IEEEeqnarray}{rCl}\label{eqL1} 
		L^*_1   & {\geq} &\mathop{\sum}\limits\limits_{\ell=\max\{r_1+1,s\}}^{\min\{r_1+s,K_s\}}  \!\!\!\! \binom{K_s-\mathop{\sum}\limits_{j=0}^{K_s}{\frac{jb_{j,1}}{\alpha N}}}{\ell-\mathop{\sum}\limits_{j=0}^{K_s}{\frac{jb_{j,1}}{\alpha N}}}  \binom{\mathop{\sum}\limits_{j=0}^{K_s}{\frac{jb_{j,1}}{\alpha N}}}{\ell-s} \frac{\ell-\mathop{\sum}\limits_{j=0}^{K_s}{\frac{jb_{j,1}}{\alpha N}}}{\binom{K_s}{s}(\ell-1)} \nonumber\\
		&\stackrel{(a)}{=}  &   \mathop{\sum}\limits_{\ell=\max\{r_1+1,s\}}^{\min\{r_1+s,K_s\}} \frac{\binom{K_s}{\ell} \binom{\ell-1}{r_1} \binom{r_1}{\ell-s}\ell}{\binom{K_s}{r_1} \binom{K_s}{s}({\ell - 1} )}\stackrel{(b)}{=}       L_1(r_1, s,K_s), \label{eqLbL1}
	\end{IEEEeqnarray} 
where (a) holds by the definition of $r_1$ in \eqref{conv4}, and (b) holds by the definition of 	$L_1(r_2, s,K_s)$ in \eqref{LoadCDC}.  For the general case $0 \leq r_1 \leq K_s$, using {the} same method as in \cite{b1}, we can prove that   $L_1^*$ is lower bounded by  the lower convex envelope of the points $\{(r_1, L_1(r_1, s,K_s)): r_1 \in \{0, \ldots, K_s\} \}$.
	
	\subsubsection{The Lower Bound of $L^*_2$} 
	Since $\binom{K_s\!-\!j}{\ell\!-\!j} \binom{j}{\ell-s} \frac{\ell-j}{\ell}$ is convex with respect to $j$, by Jensen's inequality, we have
	 \begin{align}
		L^*_2 & \ge \frac{1}{\binom{K_s}{s}} \mathop{\sum}\limits_{\ell=\max\Big\{\mathop{\sum}\limits_{j=0}^{K_s}{\frac{jb_{j,2}}{\bar{\alpha} N}}+1,s\Big\}}^{\min\Big\{\mathop{\sum}\limits_{j=0}^{K_s}{\frac{jb_{j,2}}{\bar{\alpha} N}}+s,K_s\Big\}} \binom{K_s-\mathop{\sum}\limits_{j=0}^{K_s}{\frac{jb_{j,2}}{\bar{\alpha} N}}}{\ell-\mathop{\sum}\limits_{j=0}^{K_s}{\frac{jb_{j,2}}{\bar{\alpha} \!\!N}}} \nonumber
		\\& \hspace{15ex} \ \cdot \binom{\mathop{\sum}\limits_{j=0}^{K_s}{\frac{jb_{j,2}}{\bar{\alpha} N}}}{\ell-s} \frac{\ell-\mathop{\sum}\limits_{j=0}^{K_s}{\frac{jb_{j,2}}{\bar{\alpha} N}}}{\ell} \nonumber
		\\&\stackrel{(a)}{=} \mathop{\sum}\limits_{\ell=\max\{r_2+1,s\}}^{\min\{r_2+s,K_s\}}\!\! \frac{\binom{K_s}{\ell} \binom{\ell-1}{r_2} \binom{r_2}{\ell-s}}{\binom{K_s}{r_2} \binom{K_s}{s}} \stackrel{(b)}{=} L_2(r_2, s,K_s),  \label{eqLbL2}
		\end{align}
{where (a) holds by the definition of $r_2$ in \eqref{conv4}, and (b) follows by the definition of 	$L_2(r_2, s,K_s)$ in \eqref{LoadACDC}. 
	Using the same method as in \cite{b1}, we can prove that   $L_2^*$ is lower bounded by  the lower convex envelope of the points $\{(r_2,L_2(r_2, s,K_s)): r_2 \in \{0, \ldots, K_s\} \}$}.
	
	From \eqref{eqLL1L2}, \eqref{eqLbL1} and \eqref{eqLbL2}, we obtain  
	\begin{IEEEeqnarray}{rCl}\label{eqLo}
	L^*\geq \alpha   L_1(r_1, s,K_s)+ \bar{\alpha}   L_2(r_2, s,K_s).
\end{IEEEeqnarray}
	From \eqref{eqLo} and Definition \ref{DefTime},  the optimal Shuffle time, denoted by $T^*_\textnormal{shuffle}$, is  lower bounded by 
	\begin{IEEEeqnarray}{rCl}\label{ShuffleTime}
	T^*_\textnormal{shuffle} &\geq&  c_\textnormal{s} L^*\nonumber\\
	&\geq &c_\textnormal{s} \alpha L_1(r_1, s,K_s)+ c_\textnormal{s} \bar{\alpha}   L_2(r_2, s,K_s).  
\end{IEEEeqnarray}

	\subsection{Lower bounds of Map, Reduce and Execution time }
	\subsubsection{Map Time}
	From Definition \ref{DefTime},  the optimal Map  time, denoted by $T^*_\textnormal{map}$, is  lower bounded by 
	\begin{IEEEeqnarray}{rCl}\label{eqMpTime}
	T^*_\textnormal{map} & \geq & c_\text{m}p\nonumber\\
			  &\geq&  c_\text{m}\cdot \max \left\{ \frac{\sum_{k\in\mathcal{K}_s} |\mathcal{M}_k|}{K_sN}, \frac{\sum_{k\in\mathcal{K}_h} |\mathcal{M}_k|}{K_hN} \right\}\nonumber\\
			&{=}& c_\text{m}\cdot \max \left\{ \frac{\sum_{j=0}^{{K}_s} jb_{j,1}+\sum_{j=0}^{{K}_s} jb_{j,2}}{K_sN},\frac{\sum_{j = 0}^{K_s}b_{j,2}}{K_hN}\right\} \nonumber\\
			&=& c_\text{m}\cdot \max \left\{\alpha \frac{r_1}{K_s} + \bar{\alpha} \frac{r_2}{K_s},\frac{\bar{\alpha} }{K_h}\right\},
	\end{IEEEeqnarray}
	where $p$ is the peak computation load defined in Definition \ref{defCpLoad} and the last equality holds by \eqref{conv4}.
	
	\subsubsection{Reduce time} From Definition \ref{DefTime},  the optimal Reduce  time, denoted by $T^*_\textnormal{reduce}$, is  lower bounded by 
	\begin{IEEEeqnarray}{rCl}\label{eqRdTime}
		T^*_\textnormal{reduce} \geq  
		c_\textnormal{r} \cdot \max \limits_{k_s \in \{1, \dots, K_s\}} |\mathcal{W}_k| = c_\textnormal{r}\frac{sQ}{K_s}, 
	\end{IEEEeqnarray}
	where the last equality holds by \eqref{eqReduce}.
	\subsubsection{Execution time} From   (\ref{ShuffleTime}--\ref{eqRdTime}) and  Definition \ref{DefTime}, we obtain 
\begin{IEEEeqnarray}{rCl}
T^*_{\mathcal{W}}(s)
		&\geq &  	c_\text{m} \max \left\{\alpha \frac{r_1}{K_s} \!+ \!\bar{\alpha} \frac{r_2}{K_s},\frac{\bar{\alpha} }{K_h}\right\} +\! c_\textnormal{r}\frac{sQ}{K_s}
		\nonumber\\&& +c_\textnormal{s} \alpha L_1(r_1, s,K_s)+ c_\textnormal{s} \bar{\alpha}   L_2(r_2, s,K_s) ,
\end{IEEEeqnarray}		 
which completes the converse proof of Theorem \ref{theorem T}.
		}}

\end{document}